\documentclass[twocolumn]{autart}
\usepackage{mathrsfs}
\usepackage[english]{babel}

\usepackage[noadjust]{cite}
\usepackage{amsmath,amsfonts,amssymb,graphicx}
\usepackage{algorithm}
\usepackage{pdfsync}
\usepackage{booktabs}
\usepackage{algpseudocode}
\usepackage{graphics}
\usepackage{epsfig}
\usepackage{multirow}
\usepackage{subfigure}
\usepackage{tikz}
\usepackage{bm}
\usepackage{flushend}

\usepackage{mathrsfs}
\usepackage{color}
\usepackage{tikz-cd}
\usepackage{booktabs}
\usepackage{arydshln}
\usepackage{adjustbox}

\newtheorem{theorem}{Theorem}[section]
\newtheorem{lemma}[theorem]{Lemma}

\newtheorem{proposition}[theorem]{Proposition}
\newtheorem{remark}[theorem]{Remark}

\newcommand{\R}{{\rm  I\kern-2pt R}}
\renewcommand{\Re}{{\rm  I\kern-2pt R}}

\begin{document}

\begin{frontmatter}

\title{Fusion Estimation in Multi-sensor Systems for Data Packets with Disrupted Identities}

\author[label1]{Haoyuan Xu}\ead{haoyuanxu2021@qq.com},
\author[label1]{Yuzhe Li}\ead{yuzheli@mail.neu.edu.cn},

\address[label1]{The State Key Laboratory of Synthetical Automation for Process Industries, Northeastern University, Shenyang 110819, China.}

\maketitle

\begin{abstract}
	
	In this paper, we explore the problem of fusion estimation for a multi-sensor system where the identity of the data packet received by each sensor may be disrupted or incorrect due to confusion in device identity allocation, communication protocol defects, or the lack of a clear sensor identifier.
	This can result in a random shuffle of the data components during the fusion estimation process, compromising the performance of the fusion estimation. To address this issue, we introduce the concepts of permutations and symmetry groups to describe this phenomenon as data packet permutation. We construct statistics to simplify the information set, developing two algorithms: a Bayesian approach, which performs fusion using posterior arrangement probabilities, and a greedy approach, which effectively improves estimation performance by guessing the likely data arrangement. We compare these two algorithms and demonstrate that both are expectation error-bounded. 
	We improve algorithms for information-scarce scenarios. By employing the expectation-maximization algorithm, we fill in the prior information of data arrangement where the correct convergence is proven. Finally, we present numerical simulations to validate our results.
	
\end{abstract}

\begin{keyword}
	Multi-sensor systems, disrupted identities, fusion estimation.
\end{keyword}

\end{frontmatter}

\section{Introduction} \label{Section:Introduction}

	In the era of rapid technological advancement, the integration of multiple sensors into cohesive systems has emerged as a pivotal strategy in enhancing the capabilities of collecting information from the environment. Multi-sensor systems have found crucial applications across diverse domains, including robotics, autonomous vehicles, environmental monitoring, and healthcare~\cite{muzammal2020multi}, underscoring their widespread importance and applicability. These systems leverage the unique strengths of each sensor type, such as cameras, LiDAR, and thermal sensors~\cite{kong2020multi}, to overcome the limitations inherent in single-sensor setups. 	
	{
	By fusing data from these varied sources, the system gains a more comprehensive understanding of the environment, achieving higher accuracy, reliability, and robustness in perception and decision-making. Hence, fusion estimation has emerged as a key technique to mitigate sensor noise and bias, enabling accurate and robust state estimation in distributed networks~\cite{wang2017convergence}. The increasing reliance on multi-sensor communication also raises significant security concerns.}

	In numerous practical scenarios, the accurate parsing of the source of data packets by the receiving end can be compromised by a variety of factors. For example, the incorporation of additional sensors and alterations in network topology may necessitate the reassignment of sensor identity documents (IDs) within the network. The reassignment typically relies on ID management~\cite{wang2010id} algorithms, as discussed in~\cite{xu2020identity}. While such types of approaches can improve the robustness of the system by decentralization or dynamic revocation, they may also introduce communication overhead and require additional trust dependencies.
	The loss of data packet identity is a common phenomenon with diverse causes. Temporal misalignment from delays, jitter, or asynchronous clocks can hinder reliable source association~\cite{nila2024synchronization}. In privacy-preserving applications, packets are deliberately anonymized to protect confidentiality, forcing the center to process data without origin~\cite{aravind2023tracing}. Protocol issues, such as dynamic host configuration protocol (DHCP) expiration or address collisions, also yield ambiguous sources in large networks~\cite{aldaoud2021dhcp}. Beyond payload errors, header corruption may irreversibly damage identifiers, leaving usable but anonymous data~\cite{hermans2014all}. In resource-constrained systems like wireless sensor networks (WSNs) and internet of things (IoT) devices, metadata may be omitted to reduce overhead, producing anonymous packets at the fusion center~\cite{hudda2025review}.
	As a consequence of the fusion estimation process, such a phenomenon can be conceptualized as a reshuffle of data packets collected from multiple sensors into an unknown arrangement. These defective data packets are usually discarded and wasted. 
	Intuitively, a more reasonable data fusion mechanism may utilize such defective data packets to improve estimation performance.

	The disruption of the identity of the data packet causes the system to be unable to distinguish packets among different data sources, essentially breaking the correct order in which data packets should be processed. Existing works have attempted to address the data fusion problem under potentially incorrect packet sequencing. For example, the optimal filtering with out-of-sequence measurements~\cite{zhang2005optimal} is investigated in a system where sensors are arranged orderly but with chaotic data packet time. A Bayesian processing method for out-of-sequence measurements only due to different time delays is derived in~\cite{garcia2021continuous} for dealing with the multi-target tracking problem in a continuous-time system. An important class of methods is the Probabilistic Data Association Filter (PDAF)~\cite{bar1975tracking}, which originated from target tracking research. Subsequent developments include the Joint PDAF~\cite{rezatofighi2015joint}. A recent study~\cite{zhao2025state} introduced several heuristic algorithms based on comparing bias magnitudes in Kalman filtering to reconstruct packet identities; however, these methods cannot be proven optimal in terms of reconstruction probability or fusion error. Although the aforementioned studies exhibit a degree of efficacy in data retrieval, their exploitation of data patterns in addressing the issue of data rearrangement caused by identity destruction remains relatively superficial.

	Motivated by state estimation theory and statistical learning theory, this paper considers the utilization of statistical characteristics of data to re-estimate the disrupted identity of data packets in a multi-sensor system. 
	
	The main contributions of this paper are summarized as follows:
	\begin{enumerate}
		\item
		{
		By employing concepts from permutation and symmetric groups, we reconstruct the data packets into an auxiliary vector, which is characterized by distinct explicit Gaussian distributions corresponding to different data arrangements.} {This statistic admits stable distribution parameters even when the underlying state is unstable, making it an efficient and analytical prior for inferring packet permutations.}

		\item 
		Two data fusion approaches are developed to overcome the challenge of randomly arranged data packets and prevent the increasing computational complexity associated with posterior estimates over time. The first approach, Bayesian fusion, performs approximately optimal weighted fusion using posterior arrangement probabilities; the second approach, a greedy data fusion method, recovers the data packets in the maximum-likelihood arrangement. Both approaches significantly improve state estimation performance and derive the average estimation error.
		
		\item 
		We also improve corresponding algorithms for scenarios lacking information. {For the scenario without the \textit{a priori} probability distribution of arrangements, we embed the Expectation-Maximization (EM) algorithm into the Bayesian approach, learning the prior information to assist the fusion estimation.} The correct convergence of the EM algorithm is proven when the data is sufficiently abundant, enabling them to achieve performance with a known \textit{a priori} probability distribution with the increase of data collection. 
		
	\end{enumerate}
	
	The remainder of this paper is organized as follows. Section~\ref{Section:Problem Formulation} establishes the multi-sensor system model and presents the problem statement. Section~\ref{section:The Bayesian Approach with P} develops the fusion estimation algorithms to deal with the randomly shuffled data packets with known probabilities of arrangements. Section~\ref{section:Learning Prior Information with the EM Algorithm} proposes the fusion estimation algorithms for the scenario without probabilities of shuffling order. 
	Simulation results are shown in Section~\ref{section:simulation}. Finally, Section~\ref{Section:Conclusion} concludes this paper.

	Notations:~$\mathbb{N}$ and~$\mathbb{R}$ are the sets of natural and real numbers, respectively.~$\mathbb{N}_{+}$ denotes the set of positive integers.~$\mathbb{R}^{n}$ denotes the~$n$-dimensional Euclidean space. Denote~$ \otimes $ as the Kronecker product.~$ \mathrm{Tr}\{\cdot\} $ denotes the trace operator.~$X^\mathrm{T}$ represents the transpose matrix of~$X$.~$ X \geq 0 $ (or~$ X > 0 $) represents that~$ X $ is positive semi-definite (or positive definite).~$ X \geq Y $ (or~$ X > Y $) represents~$ X - Y \geq 0 $ (or~$ X - Y > 0 $). The~$ n \times n $ identity matrix is denoted as~$I_n$, also written as~$ I $ without ambiguity. The matrix composed of identity matrices is denoted as~$I_{n,N}^\mathrm{T} = [ I_n, \cdots, I_n ]$ with~$N$ matrices~$I_n$.~
	$\mathbb{E}[\cdot]$ denotes the expectation of a random variable.~$ \lambda_i (X) $ represents the~$i$th largest eigenvalue of~$ X \in \mathbb{S}^{n}_{+} $.
	Let~$ \| x \| $ denote the~$ L^2 $ norm of a vector~$ x \in \mathbb{R}^n $. The Frobenius norm of matrices is represented as~$ \left\| \cdot \right\|_{\mathrm{F}} $. 
	To simplify the symmetrical notation,~$ ( X ) Y ( \cdot )^\mathrm{T} $ means~$ ( X ) Y ( X )^\mathrm{T} $, and~$ X +( \cdot )^\mathrm{T} $ means~$ X + X^\mathrm{T} $. {We denote~$p(x)$ as the probability density function of the random vector~$x$.} {The Gaussian probability density function of the random vector~$x$ with the mean~$ \mu $ and the covariance matrix~$ \Sigma $ is denoted as~$\mathcal{N}_x \left( \mu, \Sigma \right)$.}

\section{Problem Formulation}  \label{Section:Problem Formulation}

\subsection{System Model}

We consider a state-space model of a physical process with stochastic noises:
\begin{align} 
	x_{k+1} 	& =  A 		x_{k} + w_k, \label{eqn:process1}
	\\
	y_{k,i} 	& =  C_i	x_{k} + v_{k,i},	\label{eqn:process2}
\end{align}
where~$ x_{k} \in \mathbb{R}^{n} $ is the state vector, and~$ y_{k,i} \in \mathbb{R}^{m_i} $ is the measurement collected by smart sensor~$ i \in \{1,\ldots,N\} $. The process noise~$ w_k \sim \mathcal{N} (0, Q) $ with~$ Q \geq 0 $ and measurement noises~$ v_{k,i} \sim \mathcal{N} (0, R_i) $ with~$ R_i > 0 $ are independent white Gaussian noises, respectively, where~$ \mathbb{E}[ w_i v_j^\mathrm{T}] = 0 $. The initial state~$ x_0 \sim \mathcal{N} (0, \Pi) $ where~$ \Pi \geq 0 $, and~$ x_0 $ is independent of~$w_k$ and~$v_k$. Besides, the pairs~$(A, Q^{\frac{1}{2}})$ and~$(A, C_i)$ are controllable and observable, respectively. Each smart sensor has computation capability to generate a local state estimation in a time-step and transmit it to the fusion center via networks. 

Processing data packets with false identities during information fusion may lead to bias in state estimation results, and this bias may propagate over time. We consider the state estimation algorithm as a two-layer fusion structure \cite{sun2004multi}---local estimate and fusion estimate. The primary benefit of this architecture is that the local state estimation remains uncontaminated, as it does not incorporate external data. Consequently, the local state estimation consistently provides a reliable backup for the estimation results. 
For the local estimate, each sensor runs a local Kalman filter:
\begin{align} 
	\hat{x}^{-}_{k+1,i} &= A \hat{x}_{k,i},												\label{eqn:standard x-}
	\\
	P^{-}_{k+1,i} 		&= A P_{k,i} A^{\mathrm{T}} + Q, 								\label{eqn:standard P-}
	\\	
	\hat{x}_{k,i} 	&= \hat{x}^{-}_{k,i} + K_{k,i} (y_{k,i} - C \hat{x}_{k,i}^{-}),									\label{eqn:standard x}
	\\	
	K_{k,i} 			&= P^{-}_{k,i} C_i^{\mathrm{T}} ( C_i P^{-}_{k,i} C_i^{\mathrm{T}} + R_i)^{-1},	\label{eqn:standard K}
	\\	
	P_{k,i} 			&= (I - K_{k,i} C_i) P^{-}_{k,i},										\label{eqn:standard P}
\end{align}
where~$ \hat{x}^{-}_{k,i} $ and~$ \hat{x}_{k,i} $ are the \textit{a priori} and the \textit{a posteriori} minimum mean square error (MMSE) estimates of~$x_k$ at system time~$ k $;~$ P^{-}_{k,i} $ and~$ P_{k,i} $ are the corresponding estimation error covariance matrices.~$ \hat{x}_{k,1},\ldots, \hat{x}_{k,N} $ are transmitted to the fusion center via networks. 
The fusion center integrates the received state estimation results by the linear minimum variance criterion proposed in \cite{kim1994development}, where it has private side information of measurements, denoted as~$y_{k,0}$, generating a private local state estimation~$\hat{x}_{k,0}$ by the Kalman filter algorithm. If the network is completely reliable, the fusion algorithm follows 
\begin{align} \label{eqn:fusion x}
	\bar{x}_k =\sum_{i=0}^N{W_{k,i}}\hat{x}_{k,i}.
\end{align}
The matrix weights~$ W = [W_{k,0},\ldots,W_{k,N}] $, the corresponding estimation error~$\bar{P}_k = \mathbb{E}[(x_k - \bar{x}_k)(x_k - \bar{x}_k)^\mathrm{T} | ]$, and the local estimation error covariance matrix
\begin{align}
	\mathcal{P} _k =\left[ \begin{matrix}
		P_{k,00}&		\cdots&		P_{k,0N}\\
		\vdots&		\ddots&		\vdots\\
		P_{k,N0}&		\cdots&		P_{k,NN}\\
	\end{matrix} \right],
\end{align}
with~$ P_{k,ij} = \mathbb{E}[(x_k-\hat{x}_{k,i})(x_k-\hat{x}_{k,j})^\mathrm{T}] $, are given in~\cite{sun2004multi}. It is known that~$ P_{k,ij} $ converges exponentially to steady-state values~$ P_{ij} $. Meanwhile, the weight matrix becomes~$W_{k,i}=W_i$. For simplicity, we denote~$ \hat{x}_{k,1:N} = [\hat{x}_{k,1}^\mathrm{T},\ldots,\hat{x}_{k,N}^\mathrm{T}]^\mathrm{T}$ and~{$W_{1:N} = [W_1, \ldots, W_N]$}.
We consider that the local estimators have entered steady state. The fusion estimation~\eqref{eqn:fusion x} becomes 
{
\begin{align} \label{eqn:fusion x 2}
	\bar{x}_k = W_{0} \hat{x}_{k,0} + W_{1:N} \hat{x}_{k,1:N}.
\end{align}}

\subsection{Data Packets with Disrupted Identities}

It is assumed that in each round of data fusion, the arrival of data packets in the network follows a predetermined distribution; that is, the arrangement of components~$ \hat{x}_{k,i} $ in~$ \hat{x}_{k,1:N} $ according to the order of arrival is governed by an \textit{a priori} probability distribution. Their identities are utilized by the fusion center to differentiate between various packets.
We now first introduce the concept of permutation and symmetry groups, characterizing all possible arrangements of data packets. Let~$G_N$ be an~$N$-order symmetric group with~$ N! $ ($ N $ factorial) different elements~$ g_i : \{1,\ldots,N\} \rightarrow \{1,\ldots,N\} $ which corresponds to a permutation matrix~$ \bar{\Gamma}_i \in \mathbb{R}^{N \times N}, 1 \leq i \leq N! $, and~$ \Gamma_{i} = \bar{\Gamma}_i \otimes I_n \in \mathbb{R}^{ nN \times nN} $. Moreover, we stipulate that~$ g_1 = \mathrm{id}$ is the identity mapping, i.e.,~$ g_1(i) = i $. Then, the received data packets on a specific case of arrangement, denoted as event~$ H_i $, can be represented as
{
\begin{align} \label{eqn: disorder i for x}
	\hat{x}^{[i]}_{k,1:N} = \Gamma_{i} \hat{x}_{k,1:N} = \left[ \hat{x}_{k,g_i(1)}^\mathrm{T}, \cdots , \hat{x}_{k,g_i(N)}^\mathrm{T} \right]^\mathrm{T},
\end{align}
where~$ \hat{x}^{[i]}_{k,1:N} $ is~$\hat{x}^*_{k,1:N}$ under the corresponding arrangement~$g_i$.}
{The probability of event~$ H_i $ is denoted as~$ \mathrm{Pr} ( H_i ) $, which is a constant.~$\{H_k\}$ is i.i.d. over time and independent of the physical process.}
Based on the properties of symmetric groups and permutation matrices, we can obtain that the permutation~$ g_i $ is invertible and closed. 
In other words, the correct local state estimations collected from the other sensors can be restored by~$ \hat{x}_{k,1:N} = \Gamma_{i}^\mathrm{T} \hat{x}^*_{k,1:N}|_{H_i} $ where~$ \exists \Gamma_j = \Gamma_i^\mathrm{T} $. 
\subsection{Problems of Interest}
Based on the above descriptions, the main problems we are interested in consist of the following:
\begin{enumerate}
	\item 
	How does the fusion center complete state estimation by utilizing the received data packets missing identity and the \textit{a priori} probability distribution of arrangements, i.e.,~$ \mathrm{Pr}(H_i) $?

	\item 
	What are the estimation performances of the developed algorithms for the previous problem?

	\item 
	If~$ \mathrm{Pr} (H_i) $ is unknown, how can the fusion center complete state estimation by utilizing only data packets missing identity? Can it converge correctly with the increase of received data?

\end{enumerate}

\section{Fusion Estimation with Prior Information} \label{section:The Bayesian Approach with P}

In this section, we first assume that the \textit{a priori} probability distribution of~$ H_i $ is known, proposing two data fusion mechanisms to handle disrupted identity. Then, we further analyze and compare the two approaches from the perspectives of fusion characteristics and estimation performance. The situation where the \textit{a priori} probability is unknown will be addressed in the next section.

\subsection{A Bayesian Approach} \label{section:A Bayesian Approach}
The received~$\hat{x}_{k,1:N}^*$ follows a Gaussian mixture model (GMM) because each arrangement is Gaussian distributed. Unlike a standard Bayesian estimator that directly evaluates the conditional expectation~\cite{flam2012mmse}, the proposed Bayesian approach only uses Bayesian posterior probabilities as fusion weights. However, the stability of the parameters on the distribution of~$\hat{x}_{k,1:N}^*$ cannot be ensured in unstable systems.
The computational complexity associated with determining the probability density of the received sequence~$ \hat{x}^*_{k,1:N} $ will increase over time, which is inconvenient for a module embedded in the fusion center to directly calculate the \textit{a posteriori} probability of~$ H_i $ by using~$ \hat{x}^*_{k,1:N} $. It encourages us to seek more convenient statistical characteristics as the condition for calculating~$\mathrm{Pr}(H_i)$. 

Since~$(A, C_i)$ is observable, we can conclude that the covariance of~$ x_k - \hat{x}_{k,i} $ converges. Therefore, the statistic~$ \theta_{k,i} = \hat{x}_{k,0} - \hat{x}_{k,i} $ has bounded and convergent covariance even for unstable system states. Let~$ \theta_{k} = [ \theta_{k,1}^\mathrm{T}, \cdots , \theta_{k,N}^\mathrm{T} ]^\mathrm{T} $. {The covariance between~$ \theta_{k,i} $ and~$ \theta_{k,j} $}, denoted as~$ \Psi_{ij} $, satisfies
\begin{align*} 
	\Psi_{ij} 
	= & \mathbb{E}[(\hat{x}_{k,i} - \hat{x}_{k,0}) (\hat{x}_{k,j} - \hat{x}_{k,0})^\mathrm{T}] \nonumber
	\\
	=& P_{00} + P_{ij} - P_{i0} - P_{0j}.
\end{align*} 
Obviously, it facilitates calculating the \textit{a posteriori} probability of~$ H_i $ due to the stability and boundedness of the covariance of~$ \theta_{k,i} $. Corresponding to~$ \hat{x}^*_{k,1:N} $, denote~$ \theta_{k} $ with unknown component arrangement as~$ \theta^*_{k} $. {The case corresponding to~$H_i$ is denoted as 
\begin{align} \label{eqn: def theta}
	\theta^{[i]}_{k} = \Gamma_i \theta_{k} = \left[ \theta_{k,g_i(1)}^\mathrm{T}, \cdots , \theta_{k,g_i(N)}^\mathrm{T} \right]^\mathrm{T}.
\end{align}
}
Since~$ \hat{x}_{k,0} $ is known for sensor~$ i = 0 $,~$ \theta_{k}^* $ can be calculated easily in the fusion estimation side.
The covariance matrices of~$ \theta_{k} $ and~$ \theta_{k}^*|_{H_i} $ can be given by~$ \Psi = \left[ \Psi_{ij} \right] \in \mathbb{R}^{nN \times nN} $ and~$ \Gamma_i \Psi \Gamma_i^\mathrm{T} $, respectively. We consider that~$ \Psi > 0 $. Since~$ \hat{x}_{k,i} $ is Gaussian and unbiased, the probability density function of~$ \theta^{[i]}_{k} $ is given by~$ \mathcal{N}_{\theta_{k}^{[i]}}(0, \Gamma_i \Psi \Gamma_i^\mathrm{T}) $.
From the law of total probability,~$ \theta_{k}^* $ follows a mixed Gaussian distribution
\begin{align} \label{eqn: distribution theta}
	p(\theta_{k}^*) = \sum_{i=1}^{N!}{\alpha_i \mathcal{N}_{\theta_{k}^*}(0, \Gamma_i \Psi \Gamma_i^\mathrm{T})},
\end{align}
where~$ \alpha_i = \mathrm{Pr}(H_i) $ is the \textit{a priori} probability of~$ H_i $. 
Let~$ \alpha^*_{k,i} = \mathrm{Pr}(H_i | \theta_{k}^*) $. By the Bayesian formula, we obtain that
\begin{align} \label{eqn: posterior probability}
	\alpha^*_{k,i} & =\frac{\mathrm{Pr}(H_i)}{p(\theta_k^*)} p(\theta_k^*|H_i) = \frac{\alpha_i e^{ -\frac{1}{2} (\theta_{k}^*)^\mathrm{T} \Gamma_i \Psi^{-1} \Gamma_i^\mathrm{T} \theta_{k}^* }}{\sum_{j=1}^{N!}{\alpha _j e^{ -\frac{1}{2} (\theta_{k}^*)^\mathrm{T} \Gamma_j \Psi^{-1} \Gamma_j^\mathrm{T} \theta_{k}^* } }}.
\end{align}
Given~$ H_i $, from~\eqref{eqn:fusion x 2} and~\eqref{eqn: disorder i for x}, the corresponding estimation for fusion process is
\begin{align} \label{eqn: recover with H_i}
	\bar{x}_k = W_{0} \hat{x}_{k,0} + W_{1:N} \Gamma_{i}^\mathrm{T} \hat{x}^{[i]}_{k,1:N}.
\end{align}
Therefore, using the Bayesian posterior probabilities as fusion weights, the estimated value of~$ x_k $ can be obtained as follows:
\begin{align*}
	 \bar{x}_k^B =& \sum_{i=1}^{N!}{\mathrm{Pr}(H_i|\theta_k^*) \cdot \left( W_0 \hat{x}_{k,0} + W_{1:N} \Gamma_i^\mathrm{T} \hat{x}^*_{k,1:N} \right) } \nonumber
	\\
	=& W_0 \hat{x}_{k,0} + W_{1:N} \left( \sum_{i=1}^{N!}{\alpha^*_{k,i} \Gamma_{i}^\mathrm{T}}\right) \cdot \hat{x}^*_{k,1:N}.
\end{align*}
For simplicity, denote the mixed permutation matrix as~$ \Gamma_{\alpha^*_k} = \sum_{i=1}^{N!}{\alpha^*_{k,i} \Gamma_{i}} $. The Bayesian fusion estimation becomes
\begin{align} \label{eqn: Bayesian estimation}
	\bar{x}_k^{B} = W_0 \hat{x}_{k,0} + W_{1:N} \Gamma_{\alpha^*_k}^\mathrm{T} \hat{x}^*_{k,1:N}.
\end{align}
For a concise representation of the quadric error matrix for the Bayesian approach, we define~$ P^B_k = \mathbb{E}[(x_k - \bar{x}_k^B)(x_k - \bar{x}_k^B)^\mathrm{T}]$. Let the estimation error of sensor~$ i $ be~$ e_{k,i} = x_k - \hat{x}_{k,i} $. We denote~$ \tilde{\Psi}_i $ as
\begin{align} \label{eqn: tilde Psi}
	\tilde{\Psi}_i = \mathbb{E} \left[ \theta _{k}^{*}e_{k,0}^{\mathrm{T}}|H_i \right] = \mathbb{E} [ \theta _{k}^{[i]}e_{k,0}^{\mathrm{T}} ] = \Gamma_i \tilde{\Psi}_1.
\end{align}
The following theorem provides the expression of~$P^B_k$.
\begin{theorem} 
	For the system \eqref{eqn:process1}--\eqref{eqn:standard P} where the identity of data packets is disrupted, the quadric error matrix of the fusion center with the Bayesian approach is given by 
	\begin{align} \label{eqn: P_B}
		P^B_k =&  \sum_{i=1}^{N!} { \alpha_i W_{1:N} \tilde{\Theta}_i^B W_{1:N}^\mathrm{T} } + \sum_{i=1}^{N!} { \alpha_i W_{1:N} \Theta_i^B \Gamma _i\Psi^{-1} \tilde{\Psi}_{1} } \nonumber
		\\
		& + \sum_{i=1}^{N!} { \alpha_i \tilde{\Psi}_{1}^\mathrm{T} \Psi^{-1} \Gamma _{i}^{\mathrm{T}} (\Theta_i^B)^\mathrm{T} W_{1:N}^\mathrm{T} } + P_{00},
	\end{align}
	where the constant matrices~$ \tilde{\Theta}_i^B $ and~$ \Theta_i^B $ are given by  
	\begin{align} 
		\Theta_i^B =& \int_{\mathbb{R}^{nN}} \Gamma_{\alpha^*_k}^\mathrm{T} \theta_k^* (\theta_k^*)^\mathrm{T}  \mathcal{N}_{\theta_{k}^*}(0, \Gamma_i \Psi \Gamma_i^\mathrm{T}) \mathrm{d} \theta_{k}^*, \label{eqn: Theta_i B}
		\\
		\tilde{\Theta}_i^B =& \int_{\mathbb{R}^{nN}} \Gamma_{\alpha^*_k}^\mathrm{T} \theta_k^* (\theta_k^*)^\mathrm{T} \Gamma_{\alpha^*_k} \mathcal{N}_{\theta_{k}^*}(0, \Gamma_i \Psi \Gamma_i^\mathrm{T}) \mathrm{d} \theta_{k}^*. \label{eqn: tilde Theta_i B}
	\end{align}
	Furthermore, an upper bound of~$ \mathrm{Tr}\{P^B_k\} $ is given by
	\begin{align} \label{eqn: upper bound P_B}
		\mathrm{Tr} \{ P^B_k \} \leq & \mathrm{Tr} \{ P_{00} + \tilde{\Psi}_1^{\mathrm{T}} \Psi^{-1} \tilde{\Psi}_1 \} \nonumber
		\\
		& + 2 \left\| W_{1:N}^\mathrm{T} W_{1:N} \right\|_{\mathrm{F}} \mathrm{Tr}\left\{ \Psi \right\}.
	\end{align}
	\begin{pf*}{\bf Proof.}
		Note that~$ \sum_{i=1}^{N!}{\alpha^*_{k,i}} = 1 $ and~$ \sum_{i=0}^{N}{W_i} = I $. It can be obtained that~$ I_n = W_0 + W_{1:N} \Gamma_{\alpha^*_k}^\mathrm{T} I_{n,N} $. Given~$H_i$, the received data packets are~$ \hat{x}^{*}_{k,1:N} = \hat{x}^{[i]}_{k,1:N} $. From~\eqref{eqn: Bayesian estimation}, the estimation error becomes 
		\begin{align} \label{eqn: estimation error Bayes H 1}
			x_k - \bar{x}_k^B = e_{k,0} + W_{1:N} \Gamma_{\alpha^*_k}^\mathrm{T} \theta_{k}^{[i]}.
		\end{align}
		Note that the covariance between~$ e_{k,0} $ and~$ \theta_{k}^{[i]} $ is provided by~$ \tilde{\Psi}_i = \Gamma_i \tilde{\Psi}_1 $. Let~$ \tilde{e}_{k,0}=e_{k,0}-\tilde{\Psi}_{i}^{\mathrm{T}}\Gamma _i\Psi^{-1} \Gamma _{i}^{\mathrm{T}} \theta_{k}^{[i]} $. The joint distribution of~$ \tilde{e}_{k,0} $ and~$ \theta_{k}^{[i]} $ given~$ H_i $ becomes
		\begin{align} \label{eqn: distribution tilde e theta}
			\left[ \begin{array}{c}
				\tilde{e}_{k,0}\\
				\theta_{k}^{[i]}\\
			\end{array} \right] \sim \mathcal{N} \left(0, \left[ \begin{matrix}
				P_{00}-\tilde{\Psi}_{1}^{\mathrm{T}} \Psi^{-1} \tilde{\Psi}_1&		\\
				&		\Gamma _i\Psi \Gamma _{i}^{\mathrm{T}}\\
			\end{matrix} \right] \right).
		\end{align}
		In this view,~$ x_k - \bar{x}_k^B $ becomes
		\begin{align}
			x_k - \bar{x}_k^B = \tilde{e}_{k,0} + \left( W_{1:N} \Gamma_{\alpha^*_k}^\mathrm{T} + \tilde{\Psi}_{1}^{\mathrm{T}} \Psi^{-1} \Gamma _{i}^{\mathrm{T}} \right) \theta_{k}^{[i]}.
		\end{align}
		From \eqref{eqn: tilde Psi} and \eqref{eqn: distribution tilde e theta}, the quadric error matrix given~$ H_i $ is 
		\begin{align} \label{eqn: PB_k H_i}
			& \mathbb{E}[(x_k - \bar{x}_k^B)(x_k - \bar{x}_k^B)^\mathrm{T} | H_i] \nonumber
			\\
			=& P_{00} + W_{1:N} \mathbb{E}[\Gamma_{\alpha^*_k}^\mathrm{T} \theta_k^* (\theta_k^*)^\mathrm{T} \Gamma_{\alpha^*_k} | H_i ] W_{1:N}^\mathrm{T} \nonumber
			\\
			& + W_{1:N} \mathbb{E}[\Gamma_{\alpha^*_k}^\mathrm{T} \theta_k^* (\theta_k^*)^\mathrm{T} | H_i ] \Gamma _i\Psi^{-1} \tilde{\Psi}_{1} \nonumber
			\\
			& + \tilde{\Psi}_{1}^\mathrm{T} \Psi^{-1} \Gamma _{i}^{\mathrm{T}} \mathbb{E}[ \theta_k^* (\theta_k^*)^\mathrm{T} \Gamma_{\alpha^*_k} | H_i ]  W_{1:N}^\mathrm{T},
		\end{align}
		where the matrices~$ \Theta_i^B = \mathbb{E}[\Gamma_{\alpha^*_k}^\mathrm{T} \theta_k^* (\theta_k^*)^\mathrm{T} | H_i ]  $ and~$ \tilde{\Theta}_i^B = \mathbb{E}[\Gamma_{\alpha^*_k}^\mathrm{T} \theta_k^* (\theta_k^*)^\mathrm{T} \Gamma_{\alpha^*_k} | H_i ] $ are obtained by \eqref{eqn: Theta_i B} and \eqref{eqn: tilde Theta_i B}, respectively. Then, according to the total probability criterion, the quadric error matrix satisfies the following identity:
		\begin{align} \label{eqn: total probability criterion}
			P^B_k = \sum_{i=1}^{N!}{\alpha_i \mathbb{E}[(x_k - \bar{x}_k^B)(x_k - \bar{x}_k^B)^\mathrm{T} | H_i] }.
		\end{align}
		From \eqref{eqn: PB_k H_i} and \eqref{eqn: total probability criterion}, \eqref{eqn: P_B} holds. 
		
		Next, we derive the upper bound of~$ P^B_k $. Note that
		\begin{align}
			\mathbb{E} \left[ \left( W_{1:N} \Gamma_{\alpha^*_k}^\mathrm{T} - \tilde{\Psi}_{i}^{\mathrm{T}}\Gamma _i\Psi^{-1} \Gamma _{i}^{\mathrm{T}} \right) \theta_{k}^* \left( \cdot \right)^\mathrm{T} | {H_i} \right] \geq 0.
		\end{align}
		The second and third terms in \eqref{eqn: PB_k H_i} satisfy the following inequality:
		\begin{align} \label{eqn: W Psi matrix upper bound}
			& W_{1:N} \mathbb{E}[\Gamma_{\alpha^*_k}^\mathrm{T} \theta_k^* (\theta_k^*)^\mathrm{T} | H_i ] \Gamma _i\Psi^{-1} \tilde{\Psi}_{1} + \left( \cdot \right)^\mathrm{T} \nonumber
			\\
			\leq & W_{1:N} \tilde{\Theta}_i^B W_{1:N}^\mathrm{T} + \tilde{\Psi}_1^{\mathrm{T}} \Psi^{-1} \tilde{\Psi}_1.
		\end{align}
		We now consider the terms of~$ W_{1:N} \tilde{\Theta}_i^B W_{1:N}^\mathrm{T} $. Define~$J(\theta_k^*, \Gamma_{\alpha^*_k}) = (\theta_k^*)^\mathrm{T} \Gamma_{\alpha^*_k} W_{1:N}^\mathrm{T} W_{1:N} \Gamma_{\alpha^*_k}^\mathrm{T} \theta_k^*$
		and let~$ \mathrm{d} \mu_{\theta_{k}^{[i]}} = \mathcal{N}_{\theta_{k}^{[i]}}(0, \Gamma_i \Psi \Gamma_i^\mathrm{T}) \mathrm{d} \theta_{k}^{[i]} $ be the probability measure. Then, we obtain 
		\begin{align} \label{eqn: trace W theta W}
			\mathrm{Tr} \left\{ W_{1:N} \tilde{\Theta}_i^B W_{1:N}^\mathrm{T} \right\} = \int J(\theta_{k}^{[i]}, \Gamma_{\alpha^*_k}) \mathrm{d} \mu_{\theta_{k}^{[i]}}.
		\end{align}
		Since~$ \Gamma_{\alpha^*_k} $ is a doubly stochastic matrix, according to the convexity of positive definite quadratic forms, we have~$\| \Gamma_{\alpha^*_k} W_{1:N}^\mathrm{T} W_{1:N} \Gamma_{\alpha^*_k}^\mathrm{T} \|_{\mathrm{F}} \leq \| W_{1:N}^\mathrm{T} W_{1:N} \|_{\mathrm{F}}$. Therefore, an upper bound of~$ J(\theta_k^*, \Gamma_{\alpha^*_k}) $ can be represented as~$J(\theta_k^*, \Gamma_{\alpha^*_k}) \leq \left\| W_{1:N}^\mathrm{T} W_{1:N} \right\|_{\mathrm{F}} \cdot \left\| \theta_k^* \right\|^2$. Combining this bound with~\eqref{eqn: trace W theta W}, we obtain  
		\begin{align} \label{eqn: upper bound trace W theta W}
			\mathrm{Tr} \left\{ W_{1:N} \tilde{\Theta}_i^B W_{1:N}^\mathrm{T} \right\} \leq \left\| W_{1:N}^\mathrm{T} W_{1:N} \right\|_{\mathrm{F}} \mathrm{Tr} \left\{  \Psi  \right\}.
		\end{align}
		Combining \eqref{eqn: PB_k H_i}, \eqref{eqn: total probability criterion}, \eqref{eqn: W Psi matrix upper bound}, and \eqref{eqn: upper bound trace W theta W}, we obtain the upper bound of~$ \mathrm{Tr} \{ P^B_k \} $ in \eqref{eqn: upper bound P_B}.
	\end{pf*}
\end{theorem} 

\subsection{A Greedy Approach} \label{section:A Greedy Approach}
Another approach is to guess the arrangement of the data packets first, i.e., event~$ H_i $, and then fuse the received estimation results in this arrangement by~\eqref{eqn: recover with H_i}, which we call the greedy approach.
As is well known,~$ \theta_{k}^\mathrm{T} \Psi^{-1} \theta_{k} \sim \chi^2(nN) $. If the arrangement of the components~$ \theta_{k,i} $ in~$ \theta_{k} $ is disrupted, this proposition no longer holds. More importantly, in statistical terms,~$ (\theta_{k}^*)^\mathrm{T} \Psi^{-1} \theta_{k}^* $ tends towards a larger mean~\cite{zhao2025state}; 
that is,~$ \mathbb{E}[(\theta_{k}^*)^\mathrm{T} \Psi^{-1} \theta_{k}^*|H_i] \geq \mathbb{E}[\theta_{k}^\mathrm{T} \Psi^{-1} \theta_{k}]$.
Therefore, we can obtain that
\begin{align*} 
	i = \underset{ 1 \leq \hat{i} \leq N! } {arg \min}  ~\mathbb{E}\left[(\Gamma_{\hat{i}}^\mathrm{T}\theta_{k}^*)^\mathrm{T} \Psi^{-1} \Gamma_{\hat{i}}^\mathrm{T} \theta_{k}^*| H_i \right].
\end{align*}
In other words, when the guessed event~$ H_{\hat{i}} $ is exactly~$ H_i $, the term~$ (\Gamma_{\hat{i}}^\mathrm{T}\theta_{k}^*)^\mathrm{T} \Psi^{-1} \Gamma_{\hat{i}}^\mathrm{T} \theta_{k}^* $ can be minimized in the expectation sense.
In this view, we construct the greedy approach to guess an event~$ H_{\hat{i}} $ by the likelihood decision
\begin{align} \label{eqn: guess H}
	\hat{i} = \underset{ 1 \leq i \leq N! } {arg \max}~p(\theta_k^*|H_i)
\end{align}
to recover the arrangement of~$ \theta_{k,i} $ in~$ \theta_{k}^* $. Then, the algorithm in the fusion center with the greedy approach is
\begin{align} \label{eqn: estimation guess}
	\bar{x}_k^G = W_{0} \hat{x}_{k,0} + W_{1:N} \Gamma_{\hat{i}}^\mathrm{T} \hat{x}^*_{k,1:N}.
\end{align}
The estimation-error second-moment matrix of~$\bar{x}_k^G$ is denoted as~$P^G_k$, and the following theorem provides its expression.
\begin{theorem}
	For the system \eqref{eqn:process1}--\eqref{eqn:standard P} where the identity of data packets is disrupted, the quadric error matrix of the fusion center with the greedy approach is given by 
	\begin{align} \label{eqn: P_G}
		P^G_k =&  \sum_{i=1}^{N!} { \alpha_i W_{1:N} \tilde{\Theta}_i^G W_{1:N}^\mathrm{T} } + \sum_{i=1}^{N!} { \alpha_i W_{1:N} \Theta_i^G \Gamma _i\Psi^{-1} \tilde{\Psi}_{1}  } \nonumber
		\\
		& + \sum_{i=1}^{N!} { \alpha_i \tilde{\Psi}_{1}^\mathrm{T} \Psi^{-1} \Gamma _{i}^{\mathrm{T}} (\Theta_i^G)^\mathrm{T} W_{1:N}^\mathrm{T} } + P_{00},
	\end{align}
	where the constant matrices~$ \tilde{\Theta}_i^G $ and~$ \Theta_i^G $ are given by 
	\begin{align}
		\Theta_i^G =& \sum_{j=1}^{N!}{\int_{\Omega_j} \Gamma_j^\mathrm{T} \theta_k^* (\theta_k^*)^\mathrm{T}  \mathcal{N}_{\theta_{k}^*}(0, \Gamma_i \Psi \Gamma_i^\mathrm{T}) \mathrm{d} \theta_{k}^*}, \label{eqn: Theta_i G}
		\\
		\tilde{\Theta}_i^G =& \sum_{j=1}^{N!}{\int_{\Omega_j} \Gamma_j^\mathrm{T} \theta_k^* (\theta_k^*)^\mathrm{T} \Gamma_j \mathcal{N}_{\theta_{k}^*}(0, \Gamma_i \Psi \Gamma_i^\mathrm{T}) \mathrm{d} \theta_{k}^*}. \label{eqn: tilde Theta_i G}
	\end{align}
	The set~$ \Omega_i $ is defined as 
	\begin{align*}
		\Omega_i = \bigcap_{j=1}^{N!}\bigg\{ \theta _{k}^{*}\in \mathbb{R} ^{nN}| & \left( \theta _{k}^{*} \right) ^{\mathrm{T}}\Gamma _i\Psi^{-1} \Gamma _{i}^{\mathrm{T}}\theta _{k}^{*} 
		\\
		& \le \left( \theta _{k}^{*} \right) ^{\mathrm{T}}\Gamma _j\Psi^{-1} \Gamma _{j}^{\mathrm{T}}\theta _{k}^{*} \bigg\}.
	\end{align*}
	Furthermore, an upper bound of~$ \mathrm{Tr}\{P^G_k\} $ is given by
	\begin{align*}
	\mathrm{Tr} \{ P^G_k \} \leq & \mathrm{Tr} \{ P_{00} + \tilde{\Psi}_1^{\mathrm{T}} \Psi^{-1} \tilde{\Psi}_1 \} \nonumber
		\\
		& + 2 \left\| W_{1:N}^\mathrm{T} W_{1:N} \right\|_{\mathrm{F}} \mathrm{Tr}\left\{ \Psi \right\}.
		\end{align*}
	\begin{pf*}{\bf Proof.}
		Similar to~\eqref{eqn: distribution tilde e theta}--\eqref{eqn: PB_k H_i}, the quadric error matrix can be written as 
		\begin{align}
			&  \mathbb{E}[(x_k - \bar{x}_k^{G})(x_k - \bar{x}_k^{G})^\mathrm{T} | H_i] \nonumber
			\\
			=& P_{00} + W_{1:N} \mathbb{E}[\Gamma_{\hat{i}}^\mathrm{T} \theta_k^* (\theta_k^*)^\mathrm{T} \Gamma_{\hat{i}} | H_i ] W_{1:N}^\mathrm{T} \nonumber
			\\
			& + W_{1:N} \mathbb{E}[\Gamma_{\hat{i}}^\mathrm{T} \theta_k^* (\theta_k^*)^\mathrm{T} | H_i ] \Gamma_i \Psi^{-1} \tilde{\Psi}_{1} \nonumber
			\\
			& + \tilde{\Psi}_{1}^\mathrm{T} \Psi^{-1} \Gamma_{i}^{\mathrm{T}} \mathbb{E}[ \theta_k^* (\theta_k^*)^\mathrm{T} \Gamma_{\hat{i}} | H_i ]  W_{1:N}^\mathrm{T},
		\end{align}
		where~$ \hat{i} $ is given by \eqref{eqn: guess H}. Therefore, we can infer that 
		\begin{align*}
			\Theta_i^G = \mathbb{E}[\Gamma_{\hat{i}}^\mathrm{T} \theta_k^* (\theta_k^*)^\mathrm{T} | H_i ]  \text{ and } \tilde{\Theta}_i^G = \mathbb{E}[\Gamma_{\hat{i}}^\mathrm{T} \theta_k^* (\theta_k^*)^\mathrm{T} \Gamma_{\hat{i}} | H_i ].
		\end{align*}
		The rest of the proofs are similar to \eqref{eqn: total probability criterion}--\eqref{eqn: upper bound trace W theta W}.		
	\end{pf*}
\end{theorem}
\begin{remark}
	In fact, the proposed data fusion method is still applicable to scenarios with data packet loss. Under conditions of packet loss where the number of received packets is less than the number of data sources,~$\theta_k^*$ still follows a Gaussian mixture distribution. The number of Gaussian mixture components corresponds to the number of possible permutations of the received packets.  In such cases, both the Bayesian and greedy approaches can be directly adapted from the full-data scenario. The only difference is that the weight matrix~$W$ needs to be recalculated based on the data source matched to the data packet.
\end{remark}

\subsection{Analysis and Comparison}
In fact, the Bayesian based fusion estimation can be regarded as an approximately optimal
posterior weighted estimation. Specifically, for the optimal perspective, we have
\begin{align} \label{eqn:optimal Bayesian}
    \bar{x}_k^{*} 
    =& \sum_{i=1}^{N!} \alpha^*_{k,i}
    (W_0 \hat{x}_{k,0}
    + W_{1:N} \Gamma_{i}^\mathrm{T} \hat{x}^*_{k,1:N})
    \nonumber\\
    =& \sum_{i=1}^{N!} \alpha^*_{k,i} \cdot 
    \underset{\bar{x} \in \mathcal{F}_i}{\arg\min}
    ~\mathbb{E}
    \left[
    \|x_k-\bar{x}\|^2 \mid H_i
    \right],
\end{align}
where~$\mathcal{F}_i$ denotes the class of linear unbiased fusion
estimators under arrangement~$H_i$.
Since~$\alpha^*_{k,i}$ is expected to be close to~$\mathrm{Pr}(H_i| \hat{x}_{k,0}, \theta_k^*)$, the resulting~\eqref{eqn: Bayesian estimation} can be regarded as an approximation to the optimal weighted fusion.
However, it is not a satisfactory strategy in all scenarios. Generally speaking, the fusion center cannot restore the received data to its original state in any system time-step. More specifically, if~$ \exists i_1, i_2 \in \{ 1, \ldots, N! \} $ such that~$ \alpha^*_{k,i_1} \cdot \alpha^*_{k,i_2} > 0 $ and~$ \hat{x}_{k,i_1} \ne \hat{x}_{k,i_2} $, then~$ \hat{x}_{k,1:N} \ne \Gamma_{\alpha^*_k}^\mathrm{T} \hat{x}^*_{k,1:N} $. 
Compared to the Bayesian approach, the greedy method may recover data without loss, i.e.,~$ \hat{x}_{k,1:N} = \Gamma_{\hat{i}}^\mathrm{T} \hat{x}^*_{k,1:N} $ if~$ H_{\hat{i}} $ guesses right.
In fact, from \eqref{eqn: posterior probability}, we notice that the mechanism \eqref{eqn: guess H} is a likelihood decision such that
\begin{align} \label{eqn:optimal greedy}
	\hat{i} = \underset{1 \le i \le N!}{arg \max}~p(\theta_k^*|H_i).
\end{align}
In other words, it guesses the event~$ H_i $ with the largest likelihood under the received data packets, recovering data as much as possible.
Both developed methods correspond to different fusion mechanisms. 

\section{Learning Prior Information with the EM Algorithm} \label{section:Learning Prior Information with the EM Algorithm}

This section considers the case in which the \textit{a priori} probabilities of events~$ H_i $ are unknown constants. Our goal is to address situations lacking prior information by embedding the EM algorithm into the given fusion framework.
\subsection{Identity Estimation Based on EM Algorithm}

We initially consider datasets generated by probability density functions in a linear combination form. Subsequently, we apply this result to the dataset consisting of Gaussian mixture distribution data, which can prove the correct convergence of the identification of~$\mathrm{Pr}(H_i)$. Let~$ ( \mathcal{X}, \mathscr{B}, \mu) $ be a probability space where~$\mathscr{B}$ is the Borel~$\sigma$-algebra of~$\mathcal{X}$, and~$ \mu $ is the corresponding probability measure determined by~$ \mathrm{d} \mu = \bar{f}_\omega d \xi_i $ with~$\xi_i \in \mathcal{X}$. We investigate the probability density function such that~$ \bar{f}_\omega = \sum_{i=1}^{N_f} \omega_i f_i,$
where~$ f_1, \ldots, f_{N_f} $ are linearly independent probability density functions, and~$ \omega = [\omega_1, \ldots, \omega_{N_f}] $ satisfies~$ \sum_{i=1}^{N_f} { \omega_i } = 1 $ and~$ \omega_i > 0 $.
Denote~$ \Xi_L = \{ \xi_i \}_{i=1}^{L}$ as a set of random vectors produced by~$ \xi_i \sim \bar{f}_\omega(\xi_i) $. Obviously, the considered probability density~$\bar{f}_\omega$ contains the Gaussian mixture distribution.

For the EM algorithm~\cite{watanabe2010interval}, the dataset~$ \Xi_L $ from a certain sampling is considered as input to the algorithm.~$ \xi_i $ can be viewed as generated by the i.i.d. hidden variable~$ \beta_i $ with the conditional distribution~$	\xi_i |_{\beta_i = j} \sim f_j(\xi_i)$ and~$\mathrm{Pr}(\beta_i = j) = \omega_j$.
In this situation,~$ \xi_i $ still satisfies
\begin{align*}
	\xi_i \sim \sum_{j=1}^{N_f} \mathrm{Pr}(\beta_i = j) f_j(\xi_i) = \bar{f}_\omega(\xi_i).
\end{align*}
We assume that the parameter~$ \omega $ is unknown, and the functions~$ f_i $ are known.
Denote~$ \beta_{i,j}^t $ as the estimate at the~$t$th iteration of the probability for the~$i$th sample generated by the~$j$th component~$ f_j $. When~$ \xi_i $ are i.i.d., the E-step is given by 
\begin{align} \label{eqn: E step}
	\beta_{i,j}^t =& Pr( \beta_i = j | \xi_i, \hat{\omega}^t) =  Pr( \beta_i = j | \hat{\omega}^t) \frac{p( \xi_i | \beta_i = j, \hat{\omega}^t) }{p( \xi_i | \hat{\omega}^t) }  \nonumber
	\\
	=& \frac{\hat{\omega}_j^t f_j(\xi_i)}{\sum_{n=1}^{N_f} \hat{\omega}_n^t f_n(\xi_i)} = \frac{\hat{\omega}_j^t f_j(\xi_i)}{\bar{f}_{\hat{\omega}^t}(\xi_i)}.
\end{align}
For the M-step, let~$Q( \omega | \hat{\omega}^t ) =  \sum_{i=1}^L { \sum_{j=1}^{N_f} { \beta_{i,j}^t \log [\omega_j f_j(\xi_i)] }}$. We have
\begin{align} 
	\hat{\omega}_j^{t+1} &= \underset{\omega_j} {arg \max}{ ~Q( \omega | \hat{\omega}^t ) } = \frac{1}{L} \sum_{i=1}^L \beta_{i,j}^t, \label{eqn: M step}
\end{align}
where~$ \hat{\omega}_i^{t} = \hat{\omega}_i^{t} (\Xi_L) $ can be regarded as a single-valued function of the data set~$\Xi_L$, while~$ \hat{\omega}^{t} = [\hat{\omega}_1^{t},\ldots, \hat{\omega}_{N_f}^{t}] $ is the estimate at the~$t$th iteration of~$ \omega $. 

In the fusion center, we can embed a module to estimate~$\mathrm{Pr}(H_i)$ using data~$ \theta_k^* $ collected in real time. The algorithm~\eqref{eqn: E step}--\eqref{eqn: M step} applied to the fusion estimation is
\begin{align}
	&\beta_{i,j}^t = \frac{\hat{\alpha}_j^t e^{ -\frac{1}{2} (\theta_{i}^*)^\mathrm{T} \Gamma_j \Psi^{-1} \Gamma_j^\mathrm{T} \theta_{i}^* }}{\sum_{n=1}^{N!}{\hat{\alpha}_n^t e^{ -\frac{1}{2} (\theta_{i}^*)^\mathrm{T} \Gamma_n \Psi^{-1} \Gamma_n^\mathrm{T} \theta_{i}^* } }}, \label{eqn:fusion EM 1}
	\\
	&\hat{\alpha}_j^{t+1} = \frac{1}{L} \sum_{i=1}^{ L } \beta_{i,j}^t. \label{eqn:fusion EM 2}
\end{align}

\subsection{Correct Convergence Analysis}
Common available literature about the EM algorithm generally requires data to be i.i.d. and may not guarantee correct convergence for the identification of parameters. We next use ergodic theory to establish convergence to the true parameters under the assumptions stated below.
We first summarize the following statements.
\begin{lemma} \label{lemma: ergodic process}
	For a measure-preserving transformation~$ T : \mathcal{X} \rightarrow \mathcal{X} $ on a measurable space~$ ( \mathcal{X}, \mathscr{B}, \mu) $ and for~$ \xi \in  \mathcal{X} $, the following statements hold true:
	\begin{enumerate}
		\item 
		{\rm(Ergodicity)}~
		$ T $ is ergodic if~$ \forall B \in \mathscr{B} $,~$T^{-1} B = B \Rightarrow \mu(B) \in \{0,1\}$, while the process~$ \{ T^i \xi \}_{i=0}^\infty $ is ergodic.
		
		\item
		{\rm(Birkhoff Ergodic Theorem~\cite{klenke2020probability})}
		For any~$ \varphi \in L^1$-space with respect to the measure~$ \mu $, and given that~$ T $ is ergodic, we have~$ \underset{n \rightarrow \infty} \lim \frac{1}{n} \sum_{i=0}^{n-1} { \varphi \left[ T^i(\xi) \right] } = \int \varphi \mathrm{d} \mu$.
	\end{enumerate}
\end{lemma}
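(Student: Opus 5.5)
Item 1 is essentially a definition, so the plan is to justify its second clause and to prove item 2 in its standard almost-everywhere form, citing \cite{klenke2020probability} for routine details.

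\textbf{Item 1.} The phrase ``the process $\{T^i\xi\}_{i=0}^\infty$ is ergodic'' is to be read as follows. Take $\xi\sim\mu$. Because $T$ preserves $\mu$, the sequence $(T^i\xi)_i$ is stationary. Its shift-invariant events are exactly the pullbacks of $T$-invariant sets $B\in\mathscr{B}$. If $T^{-1}B=B$ forces $\mu(B)\in\{0,1\}$, then every shift-invariant event of the process is trivial, which is precisely ergodicity of the process. This only requires checking the correspondence between invariant sets of $T$ and invariant events of the orbit, and I regard it as routine.

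\textbf{Item 2.} I will follow Garsia's route through the maximal ergodic theorem.

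First, fix $\varphi\in L^1(\mu)$ and set $S_n=\sum_{i=0}^{n-1}\varphi\circ T^i$ and $M_n=\max_{0\le k\le n}S_k$. The maximal inequality to prove is $\int_{\{M_n>0\}}\varphi\,\mathrm{d}\mu\ge 0$. It follows from the pointwise bound $\varphi\ge M_n-M_n\circ T$ on the set $\{M_n>0\}$, combined with $\int M_n\circ T\,\mathrm{d}\mu=\int M_n\,\mathrm{d}\mu$, which holds by measure preservation.

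Second, define $\bar\varphi=\limsup_n S_n/n$ and $\underline\varphi=\liminf_n S_n/n$. Both are $T$-invariant, since $S_{n+1}/(n+1)$ and $S_n\circ T/n$ have the same limit points. For rationals $a<b$, let $E_{a,b}=\{\underline\varphi<a<b<\bar\varphi\}$; this set is $T$-invariant. Applying the maximal inequality to $(\varphi-b)\mathbf 1_{E_{a,b}}$ and to $(a-\varphi)\mathbf 1_{E_{a,b}}$ gives $b\,\mu(E_{a,b})\le\int_{E_{a,b}}\varphi\,\mathrm{d}\mu\le a\,\mu(E_{a,b})$. Hence $\mu(E_{a,b})=0$. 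Taking the union over countably many pairs $(a,b)$ shows $S_n/n$ converges $\mu$-a.e. to a $T$-invariant limit $\varphi^*$.

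Third, Fatou's lemma together with a truncation argument (first bounded $\varphi$, then approximation in $L^1$) shows that $\varphi^*\in L^1$ and $\int\varphi^*\,\mathrm{d}\mu=\int\varphi\,\mathrm{d}\mu$.

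Finally, ergodicity forces the invariant function $\varphi^*$ to be a.e. constant. Indeed, each level set $\{\varphi^*>c\}$ is invariant and therefore has measure $0$ or $1$. The constant must then equal $\int\varphi\,\mathrm{d}\mu$. This yields the stated limit for $\mu$-almost every $\xi$.

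\textbf{Main obstacle.} The hardest step is the maximal inequality. Its pointwise bound relies on the identity $S_{k+1}=\varphi+S_k\circ T$, and getting it right requires careful attention to the index range. I would try the one-line Garsia argument first.

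\textbf{A caveat.} The limit in item 2 holds only $\mu$-almost everywhere, not for every $\xi\in\mathcal{X}$. The statement should be read this way, and the intended application (the sample averages in the EM updates \eqref{eqn:fusion EM 1}--\eqref{eqn:fusion EM 2}) only needs this almost-sure version.
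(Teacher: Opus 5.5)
Your proposal is correct and is essentially the argument the paper relies on. The paper gives no proof of its own: item~1 is just the definition of ergodicity, and item~2 is quoted from \cite{klenke2020probability}, where Birkhoff's theorem is derived from Hopf's maximal ergodic lemma much as you outline.

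Your caveat is a genuine correction. The identity holds only for $\mu$-a.e.\ $\xi$, and $\mu$ must be a probability measure, which the surrounding setup supplies. Since ergodicity is assumed from the outset, you could also shorten item~2. Apply the maximal inequality directly to $(\varphi-c-\epsilon)\mathbf{1}_F$, with $c=\int\varphi\,\mathrm{d}\mu$ and the invariant set $F=\{\bar\varphi>c+\epsilon\}$. This yields $\int_F\varphi\,\mathrm{d}\mu\ge(c+\epsilon)\mu(F)$, which is incompatible with $\mu(F)=1$, so ergodicity forces $\mu(F)=0$; the same argument applied to $-\varphi$ handles $\underline\varphi$. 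That makes the $E_{a,b}$ step and the Fatou--truncation step unnecessary.
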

We now provide the following proposition to establish the correct convergence in our situation.
\begin{proposition} \label{proposition: correct converge of algorithm}
	For the algorithm \eqref{eqn: E step}--\eqref{eqn: M step} based on the data set~$ \Xi_L = \{ \xi_i \}_{i=1}^{L} $, if~$ \{\xi_i\}_{i=0}^\infty $ is ergodic,~$ \hat{\omega}^{t} $ converges in probability to
	\begin{align} \label{eqn: limit t L}
		\underset{t \to \infty} \lim \underset{L \to \infty} \lim \hat{\omega}^{t+1}_j (\Xi_L) =  \underset{L \to \infty} \lim \underset{t \to \infty} \lim \hat{\omega}^{t+1}_j  (\Xi_L) = \omega,
	\end{align}
	where the initial value of~$ \hat{\omega}^{t} $ satisfies~$ \sum_{i=1}^{N_f} { \hat{\omega}_i^0 } = 1 $ and~$\hat{\omega}_i^0 > 0$.
	\begin{pf*}{\bf Proof.}
		{
		From \eqref{eqn: E step} and \eqref{eqn: M step}, since~$ \{\xi_i\}_{i=0}^\infty $ is ergodic, by the Birkhoff ergodic theorem,~$ \hat{\omega}^{t+1}_j $ converges in probability to 
		\begin{align} \label{eqn: L infinity}
			\underset{t \to \infty} \lim \underset{L \to \infty} \lim \hat{\omega}^{t+1}_j (\Xi_L) = \underset{t \to \infty} \lim \int \frac{\hat{\omega}_j^t f_j}{\bar{f}_{\hat{\omega}^t}} \mathrm{d} \mu.
		\end{align}
		As is well known, due to the monotonic optimization of the EM algorithm, it will definitely converge for given~$L$, that is,
		\begin{align} \label{eqn: limit L t infinity}
			\underset{L \to \infty} \lim \underset{t \to \infty} \lim \hat{\omega}^{t+1}_j (\Xi_L) = \int \underset{t \to \infty} \lim \frac{\hat{\omega}_j^t f_j}{\bar{f}_{\hat{\omega}^t}} \mathrm{d} \mu.
		\end{align}
		Since~$ \hat{\omega}_j^t f_j / \bar{f}_{\hat{\omega}^t} \leq 1 $, the interchange of the limit and the integral
		\begin{align}
			\underset{t \to \infty} \lim \int \frac{\hat{\omega}_j^t f_j}{\bar{f}_{\hat{\omega}^t}} \mathrm{d} \mu = \int \underset{t \to \infty} \lim \frac{\hat{\omega}_j^t f_j}{\bar{f}_{\hat{\omega}^t}} \mathrm{d} \mu
		\end{align}
		is given by Lebesgue's dominated convergence theorem~\cite{rudin1964principles}. We now prove that}
		\begin{align} \label{eqn: omega is limit of t L}
			\underset{ t \to \infty } \lim \underset{ L \to \infty } \lim \hat{\omega}^{t+1}_j (\Xi_L) = \omega.
		\end{align}
		Let~$ \underset{ t \to \infty } \lim \hat{\omega}^{t+1}_j = \omega^* $. From \eqref{eqn: L infinity}, we have~$\int f_i/ \bar{f}_{\omega^*} \mathrm{d} \mu = 1, ~\forall i \in \{1,\ldots,N_f \}$.
		Therefore, we can obtain that
		\begin{align} 
			&\omega_1 + \ldots + \omega_{N_f} = \int \frac{\bar{f}_\omega}{\bar{f}_{\omega^*}} \mathrm{d} \mu = 1, \label{eqn: f/f 1}
			\\
			&\omega_1^* + \ldots + \omega_{N_f}^* = \int \frac{\bar{f}_{\omega^*}}{\bar{f}_\omega} \mathrm{d} \mu = 1. \label{eqn: f/f 2}
		\end{align}
		From \eqref{eqn: f/f 1} and \eqref{eqn: f/f 2}, we derive that 
		\begin{align}
			\int \left( \frac{\bar{f}_\omega}{\bar{f}_{\omega^*}} + \frac{\bar{f}_{\omega^*}}{\bar{f}_\omega} - 2 \right) \mathrm{d} \mu = 0.
		\end{align}
		Therefore, we have~$ \bar{f}_{\omega^*} = \bar{f}_\omega $ almost everywhere with respect to the measure~$ \mu $ and~$ \omega^* = \omega $. 
		So far, we have proved~\eqref{eqn: omega is limit of t L}.
		In addition, we see that the functions~$ \hat{\omega}^{1} (\Xi_L), \ldots, \hat{\omega}^{\infty} (\Xi_L) $ are well-defined, bounded, and continuous for given~$ \Xi_1, \ldots, \Xi_\infty $. The proof is completed.
	\end{pf*}
\end{proposition}

We now demonstrate the correct convergence in dealing with the identification process at the fusion center. 

\begin{theorem}
	Let the dataset~$ \Xi_L = \{ \theta_i^* \}_{i=1}^{L} $ be generated by the system \eqref{eqn:process1}--\eqref{eqn:fusion x}. Assume that the Gaussian densities~$\{\mathcal{N}_{\theta}(0,\Gamma_i\Psi\Gamma_i^{T})\}_{i=1}^{N!}$ are linearly independent, ensuring identifiability of \(\alpha\). The algorithm \eqref{eqn:fusion EM 1}--\eqref{eqn:fusion EM 2} for identifying parameters~$ \hat{\alpha}^t $ in \eqref{eqn: distribution theta} converges in probability to
	\begin{align}
		\underset{t \to \infty} \lim \underset{L \to \infty} \lim \hat{\alpha}^{t} (\Xi_L)  = \underset{L \to \infty} \lim \underset{t \to \infty} \lim \hat{\alpha}^{t} (\Xi_L) = \alpha,
	\end{align} 
	where~$ \alpha = [\alpha_1, \ldots, \alpha_{N!} ] $ and~$ \hat{\alpha}^{t} = [\hat{\alpha}_1^{t},\ldots, \hat{\alpha}_{N!}^{t}] $,~$ \sum_{i=1}^{N!} { \hat{\alpha}_i^0 } = 1, \hat{\alpha}_i^0 > 0 $. 
	\begin{pf*}{\bf Proof.}
		Let~$ \xi_k = \theta_{k}^* $ and~$ \omega_i = \alpha_i $. The probability density functions~$ f_i $ and~$ \bar{f}_\omega $ are given by~$f_i ( \theta_{k}^* ) = \mathcal{N}_{\theta_{k}^*}(0, \Gamma_i \Psi \Gamma_i^\mathrm{T})$ and~$ \bar{f}_\omega (\theta_{k}^*) = \sum_{i=1}^{N!}{\alpha_i \mathcal{N}_{\theta_{k}^*}(0, \Gamma_i \Psi \Gamma_i^\mathrm{T})}$, respectively. 
		We now verify that~$ \theta_{k}^* $ is ergodic. Let~$ T $ be a time shift operator. From the knowledge of Kalman filtering,~$\theta_{k+1,i} = T \theta_{k,i}$
		is ergodic. Furthermore,~$ T $ is measure-preserving when the Kalman filter enters steady state. It follows that~$ \theta_{k+1} = T \theta_k $ is ergodic and measure-preserving. The state transition from~$ \theta_k $ to~$ \theta_k^* $ can be seen as selecting the arrangement of different components in~$\theta_k$ with a fixed distribution~$ \mathrm{Pr}(H_i) = \alpha_i $. Therefore,~$ \theta_{k+1}^* = T \theta_{k}^* $ is ergodic and measure-preserving. From  {Proposition~\ref{proposition: correct converge of algorithm}},~$ \hat{\alpha}_i^{t} \rightarrow \alpha_i $ if the iteration times and the data collected are sufficient.  
	\end{pf*}
\end{theorem}
So far, the correct convergence of~$ \hat{\alpha}^{t} $ has been shown, which also means the fusion estimation with estimated probability weight will converge to~$ \bar{x}_k^B $. It should be noted that such an algorithm in the dynamic system will cause a shortage of computational resources due to the growth of the dataset~$ \Xi_L $. The addition of a newly received sample requires algorithm initialization and repeating iterations \eqref{eqn: E step}--\eqref{eqn: M step} again. 
To alleviate this defect, we use a stepwise EM (sEM) mode \cite{liang2009online}, as shown in Algorithm~\ref{Algorithm: sEM}.  We divide the received data over an infinite time horizon into multiple disjoint parts as~$ \bigcup_{s=1}^\infty \Xi^s = \Xi_\infty $. In this situation, the data set~$ \Xi^s $ is composed of
\begin{align*}
	\Xi^s = \left\{ \theta_{i_{s-1}+1}^*,\ldots , \theta_{i_{s}}^* \right\}, \quad 0 = i_0 < \ldots < i_s.
\end{align*}
The probability weights~$\alpha_i$ are then replaced by their corresponding estimated values~$\hat{\alpha}_i$.

\begin{algorithm}[htbp] 
	\caption{Fusion estimation with sEM algorithm} 
	\label{Algorithm: sEM}
	\hspace*{0.02in} {\bf Input:}~ 
	$ A, Q, C_i, R_i $
	\begin{algorithmic}[1] 
		\State Initialize~$ \mathcal{P}, W, \Psi, \Gamma_i $
		\For {\textbf{each}~$ k $} 
		\State Generate~$ \hat{x}_{k,i}$ by \eqref{eqn:standard x-}--\eqref{eqn:standard P}
		\State Calculate~$ \theta_{k}^* $ from~$ \hat{x}_{k,i}$
		\If {there exists a full batch~$\Xi^s$}  
		\State Generate~$\hat{\alpha}^{\infty} (\Xi^s) $ by~\eqref{eqn:fusion EM 1} and~\eqref{eqn:fusion EM 2} 
		\State $ \hat{\alpha} \leftarrow \eta_s \hat{\alpha}^{\infty} +  ( 1 - \eta_s) \hat{\alpha} $
		\State $ s \leftarrow s + 1 $
		\Else 
		\State Load~$ \theta_{k}^* $ into~$ \Xi^s $
		\EndIf
		\State $ \alpha \leftarrow \hat{\alpha} $, and update each~$ \alpha^*_{k,i} $ by \eqref{eqn: posterior probability} 
		\State Generate~$ \bar{x}_k^{B} $ by \eqref{eqn: Bayesian estimation}
		\EndFor
		\State \Return
	\end{algorithmic}
\end{algorithm}

The basic idea of this algorithm is to divide a large amount of online data into smaller datasets, and update~$ \hat{\alpha}^{t} $ in a stepwise pattern, where~$ \eta_s = 1/s $ is the step length for the moving average.
{We note that while the standard EM algorithm is guaranteed to converge, the stepwise variant introduces additional considerations. In particular, if the dataset is too small relative to the parameter set~$\Xi^s$, the estimates~$\hat{\alpha}^\infty$ may fluctuate significantly across iterations, potentially hindering convergence. As shown in Proposition~\ref{proposition: correct converge of algorithm}, this issue can be mitigated by ensuring~$\Xi^s$ is sufficiently large, albeit at increased computational cost. In practice, careful selection of step size and data segmentation is essential, especially when working with limited data.}

\section{Simulation} \label{section:simulation}
{The system used in the simulation is a linearized model of an aircraft, borrowed from~\cite{linehan19964}. The components of the system state~$x_k$ are horizontal velocity, vertical velocity, pitch rate, and pitch angle, respectively. The system matrix and output matrix are
\begin{align*}
	A = \left[ \begin{matrix}
		~~0.992 & ~~0.030 & -0.003 & -0.977 \\
		~~0.025 & ~~0.684 & ~~1.847 & -0.041 \\
		~~0.054 & -0.100 & ~~0.381 & -0.025 \\
		~~0.003 & -0.006 & ~~0.068 & ~~0.999 \\
	\end{matrix} \right].
\end{align*}
The measurement matrices of the sensors are
\begin{align*}
	C_1 &= \left[ \begin{matrix}
		~1 & ~~0 & ~~0 & ~~0
	\end{matrix} \right], 
	&C_2 &= \left[ \begin{matrix}
		~1 & ~~1 & ~~0 & ~~0
	\end{matrix} \right],
	\\
	C_3 &= \left[ \begin{matrix}
		~0 & ~~0 & ~~1 & -1
	\end{matrix} \right],
	&C_0 &= \left[ \begin{matrix}
		~0 & ~~1 & ~~0 & -1
	\end{matrix} \right].
\end{align*}
The covariances of the system noise~$Q$ and measurement noise~$R$ are identity matrices.} 
\begin{table}[htbp] 
	\center
	\caption{The parameters related to data packet shuffling}  \label{tab:event H}
	\renewcommand{\arraystretch}{1.5}
	\setlength{\tabcolsep}{1.2mm}{
	\begin{tabular}{ccccccc}
	\cline{1-7}
	\multicolumn{1}{c|}{$ H_i $}  				& $H_1$ & $H_2$ & $H_3$ & $H_4$ & $H_5$ & $H_6$ \\ \cline{1-7}
	\multicolumn{1}{c|}{$\mathrm{Pr}(H_i)$}  	& {\text{0.067}}& {\text{0.267}}& {\text{0.067}}& {\text{0.267}}& {\text{0.067}}& {\text{0.267}} \\ 
	\multicolumn{1}{c|}{$g_i$} 		& $ \left( \substack{ 1~2~3\\ 1~2~3 } \right) $ & $ \left( \substack{ 1~2~3\\ 3~1~2 } \right) $ & $ \left( \substack{ 1~2~3\\ 2~3~1 } \right) $ & $ \left( \substack{ 1~2~3\\ 2~1~3 } \right) $ & $ \left( \substack{ 1~2~3\\ 1~3~2 } \right) $ & $ \left( \substack{ 1~2~3\\ 3~2~1 } \right) $ 
	\\ [0.2cm]
	\hline
	\end{tabular}
	}
\end{table}
The sensor~$ i = 0 $ can receive their state estimation results from all other sensors. Table \ref{tab:event H} describes the specific permutations~$ g_i $ with corresponding probabilities~$\mathrm{Pr}(H_i)$.

\begin{figure}[t]
	\centering
	\includegraphics[width=8.8cm]{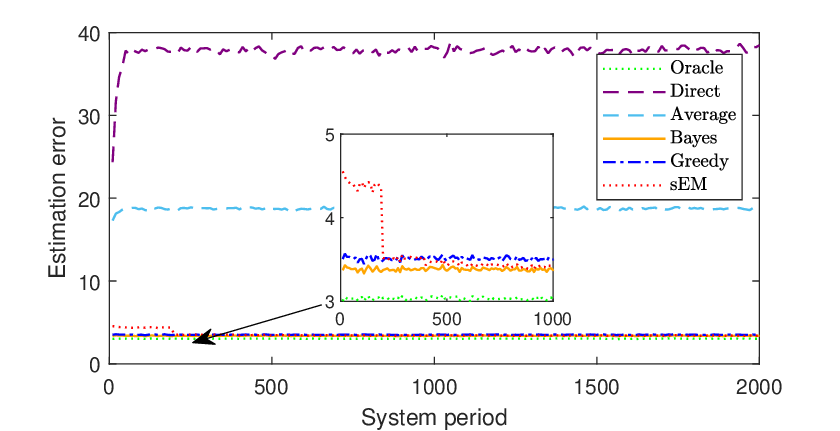}
	\caption{The fusion estimation errors in different algorithms.} \label{fig:simulation_estimation_error}
\end{figure}

\begin{figure}[t]
	\centering
	\includegraphics[width=8.8cm]{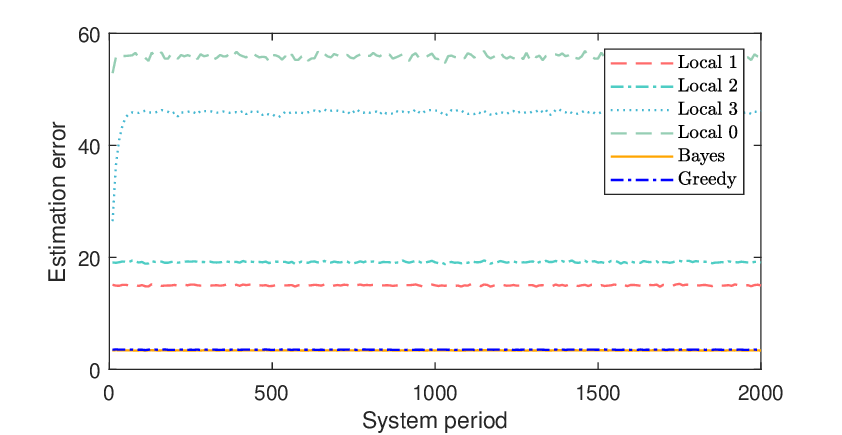}
	\caption{{The fusion estimation errors compared to local estimations.}}\label{fig:simulation_estimation_error_local}
\end{figure}

The state estimation errors of the fusion center in different scenarios are illustrated in Figure \ref{fig:simulation_estimation_error}. To demonstrate the performance improvements brought by the algorithms, we included three scenarios where fusion algorithms are not specifically designed. These scenarios are the ``Oracle'' curve, which represents the state estimation performance when the arrangement of data packets is correct; the ``Direct'' curve, which corresponds to the data packet arrangement in probability directly fused by the fusion center; and the ``Average'' curve, which computes the fused estimate as the average~$\frac{1}{N+1}\sum_{i=0}^{N}{\hat{x}_{k,i}}$ of all received data packets---a method whose result remains unchanged even if the packet sequence is disrupted. The curves ``Bayes,'' ``Greedy,'' and ``sEM'' correspond to the state estimation algorithms in Section~\ref{section:A Bayesian Approach}, Section~\ref{section:A Greedy Approach}, and Section~\ref{section:Learning Prior Information with the EM Algorithm}, respectively. In this simulation, the Bayesian approach achieves the best performance in state estimation. In addition, in Figure~\ref{fig:simulation_estimation_error_local}, we run the local estimators and compare them with the ``Bayes'' and ``Greedy'' ones in Figure~\ref{fig:simulation_estimation_error}. The results show that the proposed algorithm can significantly reduce the fusion estimation error compared with the local estimation.

\begin{table}[htbp] 
    \centering
    \caption{The frequency of events~$\theta_k^* \in \Omega_i$ matching~$H_i$.}  
    \label{tab:P Omega H}
    \renewcommand{\arraystretch}{1.5}
    \setlength{\tabcolsep}{0.8mm}{
    \begin{tabular}{ccccccc}
        \toprule[1pt]
        Events & $H_1$ & $H_2$ & $H_3$ & $H_4$ & $H_5$ & $H_6$ \\
        \midrule~
        $\Omega_i$\&$H_i$ & \textbf{{0.0557}} & \textbf{{0.2490}} & \textbf{{0.0560}} & \textbf{{0.2516}} & \textbf{{0.0571}} & \textbf{{0.2518}} \\
        \bottomrule[1pt]
    \end{tabular}
    }
\end{table}
The Greedy approach significantly improves compared to other non-specifically designed fusion estimation algorithms, indicating that it frequently achieves the correct data arrangement. We present the frequency of events~$\theta_k^* \in \Omega_i$ matching~$H_i$ occurring during the simulation process in Table \ref{tab:P Omega H}. The sum of the line in the table is~$ 92.12\% $ representing the accuracy percentage. 
For the situation without~$\mathrm{Pr}(H_i)$, the Bayesian approach equipped with the sEM process can gradually converge to the estimation performance with known~$\mathrm{Pr}(H_i)$, as shown in Figure \ref{fig:simulation_estimation_error}. 

\section{Conclusion} \label{Section:Conclusion}

This paper addresses the problem of fusion estimation in a multi-sensor system with potential data packets with disrupted identity issues by designing special fusion estimation algorithms. The mathematical formulation of data packet random permutation is presented by using a symmetry group. We develop two fusion estimation algorithms to improve the fusion process, a Bayesian approach and a greedy approach, both of which demonstrate expectation error-bounded performance when the \textit{a priori} probabilities of random permutations are known. The Bayesian approach uses posterior arrangement probabilities for weighted fusion, while the greedy approach effectively improves estimation performance by the most likely data arrangement. Furthermore, when the \textit{a priori} probabilities are unknown, we introduce the EM algorithm to estimate these probabilities and show its correct convergence. This provides a solution for real-world applications where complete knowledge cannot be available. 
The numerical simulations demonstrate the results.

\bibliographystyle{IEEETran}
\bibliography{references}

\end{document}